\documentclass[showpacs,amsmath,amssymb,onecolumn,prx,superscriptaddress,nofootinbib]{revtex4-1}

\usepackage[dvips]{graphicx} 
\usepackage{amsfonts}
\usepackage{amssymb}
\usepackage{amscd}
\usepackage{amsmath}    
\usepackage{enumerate}
\usepackage{epsfig}
\usepackage{subfigure}
\usepackage{xcolor}
\usepackage{amsthm}
\usepackage{slashbox}

\newtheorem{lemma}{Lemma}

\newtheorem{remark}{Remark}

\newcommand{\bra}[1]{\mbox{$\left\langle #1 \right|$}}
\newcommand{\ket}[1]{\mbox{$\left| #1 \right\rangle$}}

\begin{document}

\title{Measurement-device-independent quantum key distribution with uncharacterized qubit sources}

\author{Zhen-Qiang Yin}
\address{Key Laboratory of Quantum Information, University of Science and Technology of China, Hefei 230026, China\\
and Synergetic Innovation Center of Quantum Information $\&$ Quantum Physics, University of Science and Technology of China,\\
Hefei, Anhui 230026, China}
\author{Chi-Hang Fred Fung}
\email{chffung@hku.hk}
\address{Department of Physics and Center of Theoretical and Computational Physics, University of Hong Kong, Pokfulam Road, Hong Kong}
\author{Xiongfeng Ma}
\email{xma@tsinghua.edu.cn}
\address{Center for Quantum Information, Institute for Interdisciplinary Information Sciences, Tsinghua University, Beijing, China}
\author{Chun-Mei Zhang}
\address{Key Laboratory of Quantum Information, University of Science and Technology of China, Hefei 230026, China\\
and Synergetic Innovation Center of Quantum Information $\&$ Quantum Physics, University of Science and Technology of China,\\
Hefei, Anhui 230026, China}
\author{Hong-Wei Li}
\author{Wei Chen}
\email{kooky@mail.ustc.edu.cn}
\author{Shuang Wang}
\email{wshuang@ustc.edu.cn}
\author{Guang-Can Guo}
\affiliation{Key Laboratory of Quantum Information, University of Science and Technology of China, Hefei 230026, China\\
and Synergetic Innovation Center of Quantum Information $\&$ Quantum Physics, University of Science and Technology of China,\\
Hefei, Anhui 230026, China}
\author{Zheng-Fu Han}
\email{zfhan@ustc.edu.cn}
\affiliation{Key Laboratory of Quantum Information, University of Science and Technology of China, Hefei 230026, China\\
and Synergetic Innovation Center of Quantum Information $\&$ Quantum Physics, University of Science and Technology of China,\\
Hefei, Anhui 230026, China}

\begin{abstract}
Measurement-device-independent quantum key distribution (MDIQKD) is proposed to be secure against any possible detection attacks. The security of the original proposal relies on the assumption that the legitimate users can fully characterize the encoding systems including sources. Here, we propose a MDIQKD protocol where we allow uncharacterized encoding systems as long as qubit sources are used. A security proof of the MDIQKD protocol is presented that does not need the knowledge of the encoding states. Simulation results show that the scheme is practical.
\end{abstract}

\pacs{03.67.Dd}
\keywords{measurement-device-independent; quantum key distribution}

\maketitle

\section{introduction}
Quantum key distribution (QKD) \cite{Bennett:BB84:1984,Ekert:QKD:1991} allows
two distant parties (Alice and Bob) to share secret key bits. In the most commonly implemented QKD protocol, BB84 \cite{Bennett:BB84:1984}, Alice randomly encodes her qubits into one of the four quantum states, $\ket{0}$, $\ket{1}$, $|+\rangle=(\ket{0}+\ket{1})/\sqrt{2}$, and $|-\rangle=(\ket{0}-\ket{1})/\sqrt{2}$, where $\ket{0}$ and $\ket{1}$ represent the two eigenstates in the $Z$ basis and $|+\rangle$ and $|-\rangle$ represent the two eigenstates in the $X$ basis, respectively. Then, she transmits photons to Bob through a quantum channel which may be controlled by an eavesdropper, Eve. A key can be generated by postprocessing the measurement results \cite{Mayers:Security:2001,Lo:QKDSecurity:1999,Shor:Preskill:2000}. The security of BB84 can be proved by considering the equivalence between BB84 and an entanglement distillation protocol (EDP). Many other QKD protocols are also proven to be secure with similar techniques \cite{Tamaki:B92:2003,Boileau:3state:2005,PhysRevA.82.042335}. Meanwhile, tremendous progress in experimental QKD has also been achieved in the past decade \cite{Zhao:DecoyExp:2006,Rosenberg:ExpDecoy:2007,Zeilinger:ExpDecoy:2007,Peng:ExpDecoy:2007, Zhao:Decoy60km:2006,Yuan:ExpDecoy2007,Takesue:40dBQKD:2007,Stucki:250kmQKD:2009, Wang:260kmQKD:2012,Frolich:QKDnet:2013}. For a review of the subject, one can refer to Ref.~\cite{Scarani:QKDrev:2008} and references therein.

Despite the proven security in theory, Eve may still be able to crack practical QKD systems by exploiting imperfections in actual implementations. Quantum hacking strategies, taking advantage of practical loopholes, were studied in recent years, such as fake-state attack \cite{MAS_Eff_06,Makarov:Fake:08}, time-shift attack \cite{Qi:TimeShift:2007,Zhao:TimeshiftExp:2008}, phase-remapping attack \cite{Fung:Remap:07,Xu:PhaseRemap:2010}, detector-blinding attack \cite{Lydersen:Hacking:2010,Gerhardt:Blind:2011}, and unambiguous state discrimination (USD) attack \cite{Tang:USD:2013}. In order to close all possible loopholes existing in practical QKD systems, device-independent (DI) QKD \cite{MayersYao_98,Acin:DeviceIn:07} protocols have been proposed, whose security does not rely on the details of implementation devices but on the violation of Bell's inequalities or other nonlocality tests. The security is only built on a few reasonable assumptions such as having good random numbers and trusted user operation systems. Unfortunately, DIQKD requires very high detection efficiency and low channel loss to yield secure keys.
A recent DIQKD scheme with local Bell test~\cite{Lim:DIQKD:2013} overcomes the channel loss limitation and extends the achievable distance to 17 km using current experimental parameters.
However, this is still a short distance for practical interest, and,
so far, no DIQKD experiment has been demonstrated.

A close examination on hacking strategies indicates that most loopholes exist in the detection part of QKD systems \cite{Qi:TimeShift:2007,Zhao:TimeshiftExp:2008, MAS_Eff_06,Makarov:Fake:08,Lydersen:Hacking:2010,Gerhardt:Blind:2011}. Along this line, QKD protocols against detection loopholes are proposed \cite{MXF:DDIQKD:2012,Pawlowski:Semi:2011}. Like DIQKD, unfortunately, these protocols still require high performance of implementation devices. In 2012, Lo \emph{et al.} presented their seminal work, measurement-device-independent QKD (MDIQKD) \cite{Lo:MDIQKD:2012}, which is practical and is immune to all possible detector side channel attacks (see also Ref.~\cite{Braunstein:MIQKD:2012}). Recently, a few experimental demonstrations of MDIQKD have been performed \cite{Rubenok:MIQKDexp:2012,daSilva:MIQKD:2012,liu:MIQKDexp:2012,Tang:MDIQKDexp:2013}.
Secret key generation with MDIQKD over a distance of 50 km has been shown~\cite{liu:MIQKDexp:2012}, which is significantly longer than the achievable distance of DIQKD.

In MDIQKD, Alice and Bob each sends their encoded qubits to a measurement unit (MU), controlled by an untrusted party Eve,
who might collaborate with Eve, to perform a Bell-state measurement (BSM). A secure key can be established between Alice and Bob given
Eve's
announcements of the BSM results.
The disadvantage of the original MDIQKD in comparison to DIQKD is that
the security of MDIQKD relies on the assumption that Alice and Bob are able to characterize their encoding systems. For qubit sources, ideal four BB84 encoding states, $\{\ket{0}, \ket{1}, |+\rangle, |-\rangle\}$, are assumed. For coherent state sources, the decoy-state method is assumed \cite{Hwang:Decoy:2003,Lo:Decoy:2005,Wang:Decoy:2005}. In practice, such requirements might not be strictly satisfied. For example, two imperfect encoding systems could be misaligned. In this work, we remove the security assumption that Alice and Bob must characterize their encoding states for MDIQKD. We show that by modifying the original MDIQKD and using qubit sources, one can use uncharacterized encoding systems.
When Alice (Bob) selects to output a state with index $x$ ($y$),
her (his) encoding device emits a mixed qubit state $\rho_{\text{A},x}$ ($\rho_{\text{B},y}$) to Eve.
In our framework, the initial joint state with Eve's system is
$$
\rho_{\text{A},x} \otimes \rho_{\text{B},y} \otimes \rho_\text{E}
\:\:
\text{for all $x,y$}
$$
where Eve's state $\rho_\text{E}$ is independent of $x$ and $y$.
This excludes the case of a joint encoding device where the state is
$\rho_{\text{AB},x,y}\otimes \rho_\text{E}$
and the case of hidden classical variables (unknown to Alice and Bob but known to Eve).
The latter case is often considered as
``black boxes'' in DIQKD papers \cite{MayersYao_98,Acin:DeviceIn:07},
and the state is
$\sum_\lambda p(\lambda)\rho_{\text{A},x}^\lambda \otimes \rho_{\text{B},y}^\lambda \otimes \rho_\text{E}^\lambda$, in which $p(\lambda)$ represents the probability distribution of the hidden variable $\lambda$.
Similarly, the case of pre-shared entanglements with Eve is excluded in our framework.
Therefore, our encoding devices emit mixed qubit states, which are uncharacterized to Alice and Bob. Eve may learn about
the encoding states by only performing quantum operations on them.
There are not other means for her to obtain information.
The additional requirement of devices we put here is that the BSM is able to identify at least two of the four Bell states. Denote the modified protocol as qubit-MDIQKD (QMDIQKD). The QMDIQKD protocol may be particularly relevant to the situation that Alice and Bob can make sure their encoding states are two-dimensional (i.e., the phase encoding system) while they do not trust the accuracy of their phase modulators.


The rest of the paper is organized as follows. In Sec.~\ref{sec-our-protocol}, we present the QMDIQKD protocol with qubit sources, whose security is proven in Sec.~\ref{sec-security-pure-state} for the case of pure-state sources. The security of the general (mixed-) qubit source case is given in Sec.~\ref{sec-mixed-state}. In Sec.~\ref{sec-simulation}, we present simulation results of the QMDIQKD protocol, comparing to the original one. Finally, we conclude in Sec.~\ref{sec-conclusion} with discussions.

\section{Modified MDIQKD protocol} \label{sec-our-protocol}
Similar to the original MDIQKD, Alice and Bob have symmetric encoding systems in the QMDIQKD protocol. They send their encoded qubits to
Eve
for BSM, as shown in Fig.~\ref{Fig:MIsource:MIdiag}. Here, without loss of generality, we assume a phase encoding scheme is used. Alice (Bob) can choose different phases $\varphi_0$, $\varphi_1$, $\varphi_2$, $\varphi_3$ ($\varphi'_0$, $\varphi'_1$, $\varphi'_2$, $\varphi'_3$) to perform the encoding step. Then
Eve
 performs a BSM on the incoming states and announces results to Alice and Bob.

\begin{figure}[hbt]
\centering
\resizebox{12cm}{!}{\includegraphics{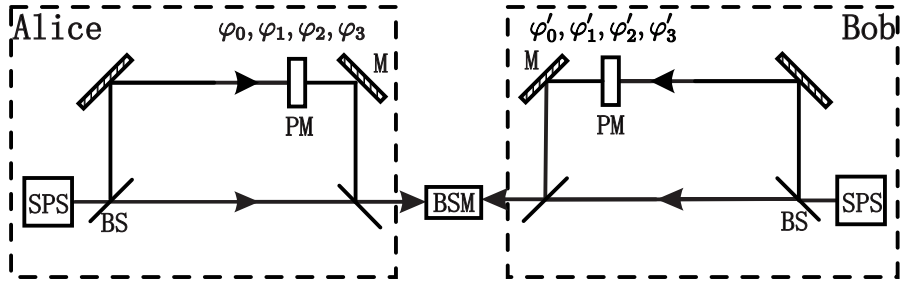}}
\caption{A schematic diagram for the QMDIQKD protocol. SPS: single photon source; BS: beam-splitter; M: mirror; PM: phase modulator; BSM: Bell-state measurement, which is an untrusted device and may be controlled by Eve; $\varphi_0$, $\varphi_1$, $\varphi_2$, $\varphi_3$ ($\varphi'_0$, $\varphi'_1$, $\varphi'_2$, $\varphi'_3$) represent Alice's (Bob's) four encoding choices.}
\label{Fig:MIsource:MIdiag}
\end{figure}

We plan to analyze its security with the EDP method \cite{Lo:QKDSecurity:1999,Shor:Preskill:2000}, which is widely used for security proofs of QKD. The essence of this method is to convert the QMDIQKD protocol into an equivalent EDP, then obtain the relation between the phase error rate and the bit error rate, which can be estimated in experiments. In this section, we give the equivalent EDP version of QMDIQKD and focus on a description where the encoding states are pure (extension to the mixed state description is trivial):

\begin{enumerate}
\item
Alice and Bob prepare $N$ pairs of entangled states,
\begin{equation} \label{MIs:Pure:iniState}
\begin{aligned}
\ket{\phi^+}_{AC} &=\ket{0}_A|\varphi_0\rangle_C+\ket{1}_A|\varphi_1\rangle_C +|2\rangle_A|\varphi_2\rangle_C+|3\rangle_A|\varphi_3\rangle_C, \\
\ket{\phi^+}_{BD} &=\ket{0}_B|\varphi'_0\rangle_D+\ket{1}_B|\varphi'_1\rangle_D +|2\rangle_B|\varphi'_2\rangle_D+|3\rangle_B|\varphi'_3\rangle_D, \\
\end{aligned}
\end{equation}
respectively, where normalization factors are omitted. Subscripts $A$ and $B$ denote Alice's and Bob's classical key bits, respectively, of which values $0$ and $1$ represent encodings in basis $0$, and values $2$ and $3$ represent encodings in basis $1$. The states, $|\varphi_x\rangle_C$ and $|\varphi'_x\rangle_D$ ($x=0,1,2,3$), are Alice's and Bob's uncharacterized encoding states, respectively, to be sent to the MU and in general they are not orthogonal to each other. Alice and Bob ensure that $|\varphi_x\rangle_C$ and $|\varphi'_x\rangle_D$ are fixed pure states in a 2-dimensional Hilbert (qubit) space but do not know the details. Later, we will extend the results to general mixed qubit systems. Here, we describe Alice and Bob's the encoding systems in an equivalent measurement-based way. By measuring her only half of the system, Alice collapses the system $C$ to one of $\ket{\varphi_i}_C$ with $i=0,1,2,3$ with equal probabilities, which is equivalent to saying that Alice prepares the system $C$ into four states with equal probabilities. The same argument holds for Bob. Hence, Eq.~\eqref{MIs:Pure:iniState} shows a mathematical equivalent description of Alice's and Bob¡¯s encoding systems.

\item
Alice and Bob send the states, labeled by $C$ and $D$, respectively, to Eve who announces her BSM result. There are three possible outcomes: BSM failure; a successful measurement result in either of the Bell states
\begin{eqnarray}
|\phi^+\rangle_{CD} &=& (\ket{0}_C\ket{0}_D+\ket{1}_C\ket{1}_D)/\sqrt{2}, \label{MIs:Pure:BellState1} \\
|\psi^+\rangle_{CD} &=& (\ket{0}_C\ket{1}_D+\ket{1}_C\ket{0}_D)/\sqrt{2}. \label{MIs:Pure:BellState2}
\end{eqnarray}
One must note that Eve might not honestly announce her measurement results. The beauty of MDIQKD \cite{Lo:MDIQKD:2012} is that the security does not depend on whether the measurement results are faithfully obtained by Alice and Bob. Another important point is that it is enough to assume that the MU only needs to identify one of the Bell states in the original MDIQKD, whereas in the QMDIQKD, at least two Bell states need to be distinguished.

\item
After receiving Eve's message, Alice and Bob perform bit sift: discarding bits when Eve announces failed BSM ($z=0$). Then, they project systems $A$ and $B$ in Eq.~\eqref{MIs:Pure:iniState} onto $\ket{0}\bra{0}+\ket{1}\bra{1}, \ket{2}\bra{2}+\ket{3}\bra{3}$, which correspond to bases $0$ and $1$, respectively. Then, they perform basis sift: discarding bits when their systems collapse into different bases. By sacrificing certain bits for error testing,\footnote{An alternative way to do that is by performing error verification after error correction \cite{Fung:Finite:2010,MXF:Finite:2011}.} they can then deduce a conditional probability distribution $p(z|x,y)$ [where $z=0,1,2$ stands for Eve's announcements] failure, Eqs.~\eqref{MIs:Pure:BellState1} or ~\eqref{MIs:Pure:BellState2}, respectively; $x$ and $y$ ($x,y\in\{0,1,2,3\}$) represent the states of systems $A$ and $B$.

\item
Finally, Alice and Bob perform EDP to systems $A$ and $B$ and obtain maximally entangled Bell states $|\phi^{+\theta}\rangle_{AB}=(\ket{0}_A\ket{0}_B+e^{i\theta}\ket{1}_A\ket{1}_B)/\sqrt{2}$, where secure key bits can be extracted.
\end{enumerate}

Let us first take a look at the original MDIQKD case where Alice and Bob each sends four BB84 states with equal probabilities. The conditional probabilities of the measurement result by a lossless MU are listed in Table \ref{Tab:MIs:prob}, from where one can see that there is a 50\% intrinsic loss for the original MDIQKD scheme when two Bell states can be distinguished. In the case of $xy=01, 10$ and $z=2$, a bit flip is operated on either Alice's or Bob's side.

\begin{table}[htb]
\centering
\caption{List of conditional probabilities $p(z|x,y)$ for the case where Alice and Bob choose four BB84 states with equal probabilities. Only the cases where they choose the same basis are considered. No loss is considered.}
\label{Tab:MIs:prob}
\begin{tabular}{c|cccc|cccc}
\backslashbox {$z$}{$x,y$}  & 0,0 & 0,1 & 1,0 & 1,1 & 2,2 & 2,3 & 3,2 & 3,3 \\
\hline
0 & 1/2 & 1/2 & 1/2 & 1/2 & 0 & 1 & 1 & 0 \\
1 & 1/2 & 0 & 0 & 1/2 & 1/2 & 0 & 0 & 1/2 \\
2 & 0 & 1/2 & 1/2 & 0 & 1/2 & 0 & 0 & 1/2 \\
\end{tabular}
\end{table}

In step 2, it is crucial in the QMDIQKD protocol to assume that the MU can distinguish at least two Bell states. In the following, we will show that the MDIQKD protocol using the uncharacterized qubit encoding system is insecure if only one Bell state can be measured by the MU, or $z\in\{0,1\}$. In this case, the row of $z=2$ in Table \ref{Tab:MIs:prob} will be emerged to the one of $z=0$. Now, let us consider the following two scenarios.
\begin{enumerate}
\item
Alice's and Bob's encoding systems emit four perfect BB84 states. Then, in the bit-sift procedure, Alice and Bob will discard the cases of $z=0$ and $2$ as listed in Table \ref{Tab:MIs:prob}. Thus, the key bits remain only when Alice and Bob send out the same states.

\item
Alice and Bob's encoding systems emit states from a set of two orthogonal states $\ket{0}$ and $\ket{1}$ regardless of their basis choices. Obviously, no secure key can be generated from this case, since Eve can simply perform a projection measurement on the qubits sent out by Alice and Bob to get the full information.
\end{enumerate}

From Alice's and Bob's points of view, they can not distinguish above two scenarios from their observed results. Thus, such a scheme is not secure. It is not hard to verify that the same conclusion holds when the MU projects onto any other Bell-state instead of Eq.~\eqref{MIs:Pure:BellState1}.

Moreover, note that not any two of Bell states will work for the QMDIQKD protocol. From the attack mentioned above, one can see that it is critical for a joint state of Alice and Bob to be able to yield two Bell state measurement results when they choose the same basis. Table~\ref{table-all-bell-states} shows such possibilities for each case of $x, y$ and four Bell states. One can see that neither the pair $\ket{00+11}$ and $\ket{01-10}$ nor the pair $\ket{01+10}$ and $\ket{00-11}$ can be used for the QMDIQKD protocol.

\begin{table}[h]
\caption{The relation between the states depending on $x,y$ values and four Bell states: $\checkmark$ means there is overlap between the state $xy$ and the corresponding Bell states, while $\times$ means the two states are orthogonal. } \label{table-all-bell-states}
\begin{tabular}{c|c|c|c|c}
\backslashbox {x, y}{Bell state}  & $\ket{00+11}$ & $\ket{01-10}$ & $\ket{01+10}$ & $\ket{00-11}$ \\
\hline
{0,0} & $\checkmark$ & $\times$ & $\times$ & $\checkmark$ \\
{1,1} & $\checkmark$ & $\times$ & $\times$ & $\checkmark$ \\
{0,1} & $\times$ & $\checkmark$ & $\checkmark$ & $\times$ \\
{1,0} & $\times$ & $\checkmark$  & $\checkmark$ & $\times$  \\
\hline
{2,2} & $\checkmark$ & $\times$ & $\checkmark$ & $\times$  \\
{3,3} & $\checkmark$ & $\times$ & $\checkmark$ & $\times$   \\
{2,3} & $\times$ & $\checkmark$ & $\times$ & $\checkmark$   \\
{3,2} & $\times$ & $\checkmark$  & $\times$ & $\checkmark$  \\
\end{tabular}
\end{table}


%
%

\section{Security against collective attacks and key-rate lower bound: pure-state case} \label{sec-security-pure-state}
Following the similar argument used in Ref.~\cite{Shor:Preskill:2000}, we design an EDP that is equivalent to the QMDIQKD protocol. Given that the initial states of Alice's, Bob's, and Eve's ancillas are separable, we first define Eve's collective attack by a unitary transformation:
\begin{equation}
\label{eqn-model1}
\begin{aligned}
&U_{Eve}|\varphi_{x}\rangle_{C}|\varphi'_{y}\rangle_{D}
|e\rangle_{Ea}\ket{0}_M=\sqrt{p(0|x,y)}|\Gamma_{xy0}\rangle_E\ket{0}_M +\sqrt{p(1|x,y)}|\Gamma_{xy1}\rangle_E\ket{1}_M +\sqrt{p(2|x,y)}|\Gamma_{xy2}\rangle_E|2\rangle_M,
\end{aligned}
\end{equation}
where $x,y=\{0,1,2,3\}$, $|e\rangle_{Ea}$ is Eve's arbitrary ancilla, $\ket{0}_M$ is the message which will be sent to Alice and Bob, and $|\Gamma_{xy0}\rangle_{E}$, $|\Gamma_{xy1}\rangle_{E}$ and $|\Gamma_{xy2}\rangle_{E}$ are all normalized Eve's arbitrary quantum states for Eve's ancilla and photons $C$, $D$. Here, we emphasize that our collective attack modeled by Eq.~\eqref{eqn-model1} has taken the basis-independent attack and basis-dependent attack \cite{GLLP:2004} into account. The basis-independent attack is a situation where the density matrix of the states transmitted in the channel when Alice and Bob's bases choices are both $0$ is the same as the case of basis $1$. The basis-dependent attack is a situation where the two density matrices are different.
The basis dependence can be measured by fidelity. In general, for basis-dependent attacks, Eve may obtain basis information through certain measurements and then adopt different operations accordingly. Any measurement that Eve may utilize to learn the basis and the followup operations can be seen as part of an extended unitary transformation on
Alice's and Bob's encoding states and her ancilla given by Eq.~\eqref{eqn-model1}.

\subsection{Example: General four-dimensional case}
\label{sec-general-4-dim}
Before proceeding, we first prove that when two-dimensional photon pairs $C$ and $D$ are general four-dimensional quantum states $|\varphi_{xy}\rangle_{CD}$, the protocol will not be secure. Generally, we assume when Alice's input is $x$ and Bob's input is $y$, the state is $|\varphi_{xy}\rangle_{CD}$. Let us consider a counter example, where
\begin{equation} \label{MIs:Pure:Counter}
\begin{aligned}
\langle\varphi_{00}|\varphi_{11}\rangle_{CD}&=0, \\
\langle\varphi_{00}|\varphi_{01}\rangle_{CD}&=-1/\sqrt{2}, \\
|\varphi_{01}\rangle_{CD}&=|\varphi_{10}\rangle_{CD}, \\
|\varphi_{22}\rangle_{CD}=|\varphi_{33}\rangle_{CD} &=|\varphi_{00}\rangle_{CD}+|\varphi_{01}\rangle_{CD}, \\
|\varphi_{23}\rangle_{CD}&=|\varphi_{32}\rangle_{CD}, \\
\langle\varphi_{23}|\varphi_{00}\rangle_{CD}=\langle\varphi_{23}|\varphi_{11}\rangle_{CD} &=\langle\varphi_{23}|\varphi_{01}\rangle_{CD}=\langle\varphi_{23}|\varphi_{22}\rangle_{CD}=0.
\end{aligned}
\end{equation}
Obviously, the above states are defined in a four-dimensional Hilbert space. Define a conditional probability $p(z|x,y)$ to be the probability for the BSM's result $z$ conditioned on the values set by Alice and Bob, $x,y$. In the case without eavesdropping, Alice and Bob will verify that their conditional probabilities $p(z|x,y)$ are as shown in Table~\ref{table-MDIQKD-s-joint}. These conditional probabilities can be satisfied by the following attack strategy:
\begin{eqnarray*}
U_{Eve}|\varphi_{xy}\rangle_{CD}\ket{0}_{Ea}\ket{0}_M&=&(|\Gamma_{xy0}\rangle_E\ket{0}_{M}+|\Gamma_{xy1}\rangle_E
\ket{1}_M)/\sqrt{2} \text{ when } x=y=0  \text{ or } x=y=1 \\
U_{Eve}|\varphi_{xy}\rangle_{CD}\ket{0}_{Ea}\ket{0}_M&=&(|\Gamma_{xy0}\rangle_E\ket{0}_{M}+|\Gamma_{xy2}\rangle_E|2\rangle_M)/\sqrt{2} \text{ when } x=1, y=0 \text{ or } x=0, y=1 \\
U_{Eve}|\varphi_{23}\rangle_{CD}\ket{0}_{Ea}\ket{0}_M&=&|\Gamma_{230}\rangle_E\ket{0}_{M} \\
|\Gamma_{000}\rangle_E &=& -|\Gamma_{010}\rangle_E.
\end{eqnarray*}
When Alice and Bob declare their basis choices, Eve can know all key values in basis 0, since $|\Gamma_{001}\rangle_E$ and $|\Gamma_{111}\rangle_E$ can be orthogonal.
One can verify that the same probability tables are obtained when the senders of Alice and Bob emit perfect BB84 states without any channel error.
Therefore, it is not possible for Alice and Bob to distinguish whether their senders transmit genuine BB84 states (in which security can hold) or this set of general four-dimensional quantum states $|\varphi_{xy}\rangle_{CD}$ (in which security cannot hold).
Thus, to avoid this case, we restrict
$|\varphi_{xy}\rangle_{CD}=|\varphi_x\rangle_C|\varphi'_y\rangle_D$.
Before discussing the general case, we first give a simple example for the case that Alice and Bob observe a set of specific probabilities $p(z|x,y)$.

\begin{table}[h]
\caption{QMDIQKD: probability table for basis 0 and 1 for a particular joint sender. }
\label{table-MDIQKD-s-joint}
\begin{tabular}{c|cccc|cccc}
\backslashbox {$z$}{$x,y$}  & 0,0 & 0,1 & 1,0 & 1,1 & 2,2 & 2,3 & 3,2 & 3,3 \\
\hline
0 & 1/2 & 1/2 & 1/2 & 1/2 & 0 & 1 & 1 & 0 \\
1 & 1/2 & 0 & 0 & 1/2 & 1/2 & 0 & 0 & 1/2 \\
2 & 0 & 1/2 & 1/2 & 0 & 1/2 & 0 & 0 & 1/2 \\
\end{tabular}
\end{table}

\subsection{Example: Ideal case}
Let us prove the security of the QMDIQKD protocol by considering an ideal case where no disturbance is allowed in the quantum channels. In this case, Alice and Bob would find that $p(0|0,0)=p(1|0,0)=p(0|1,1)=p(1|1,1)=p(0|0,1)=p(2|0,1)=p(0|1,0)=p(2|1,0)=p(1|2,2)=p(2|2,2)=p(1|3,3)=p(2|3,3)=1/2$, and $p(0|2,3)=p(0|3,2)=1$. Then from Eve's collective attack (described by $U_{Eve}$) point of view, she needs to satisfy the following conditions:
\begin{equation} \label{eqn-2}
\begin{aligned}
&U_{Eve}|\varphi_{0}\rangle_{C}|\varphi'_{0}\rangle_{D}
|e\rangle_{Ea}\ket{0}_M=\frac{1}{\sqrt{2}}|\Gamma_{000}\rangle_E\ket{0}_M+\frac{1}{\sqrt{2}}|\Gamma_{001}\rangle_E\ket{1}_M\\
&U_{Eve}|\varphi_{1}\rangle_{C}|\varphi'_{1}\rangle_{D}
|e\rangle_{Ea}\ket{0}_M=\frac{1}{\sqrt{2}}|\Gamma_{110}\rangle_E\ket{0}_M+\frac{1}{\sqrt{2}}|\Gamma_{111}\rangle_E\ket{1}_M\\
&U_{Eve}|\varphi_{0}\rangle_{C}|\varphi'_{1}\rangle_{D}
|e\rangle_{Ea}\ket{0}_M=\frac{1}{\sqrt{2}}|\Gamma_{010}\rangle_E\ket{0}_M+\frac{1}{\sqrt{2}}|\Gamma_{012}\rangle_E|2\rangle_M\\
&U_{Eve}|\varphi_{1}\rangle_{C}|\varphi'_{0}\rangle_{D}
|e\rangle_{Ea}\ket{0}_M=\frac{1}{\sqrt{2}}|\Gamma_{100}\rangle_E\ket{0}_M+\frac{1}{\sqrt{2}}|\Gamma_{102}\rangle_E|2\rangle_M\\
\end{aligned}
\end{equation}
when Alice and Bob prepare the $0$ basis and
\begin{equation}
\label{eqn-3}
\begin{aligned}
&U_{Eve}|\varphi_{2}\rangle_{C}|\varphi'_{2}\rangle_{D}
|e\rangle_{Ea}\ket{0}_M=\frac{1}{\sqrt{2}}|\Gamma_{221}\rangle_E\ket{1}_M+\frac{1}{\sqrt{2}}|\Gamma_{222}\rangle_E|2\rangle_M\\
&U_{Eve}|\varphi_{3}\rangle_{C}|\varphi'_{3}\rangle_{D}
|e\rangle_{Ea}\ket{0}_M=\frac{1}{\sqrt{2}}|\Gamma_{331}\rangle_E\ket{1}_M+\frac{1}{\sqrt{2}}|\Gamma_{332}\rangle_E|2\rangle_M\\
&U_{Eve}|\varphi_{2}\rangle_{C}|\varphi'_{3}\rangle_{D}
|e\rangle_{Ea}\ket{0}_M=|\Gamma_{230}\rangle_E\ket{0}_M\\
&U_{Eve}|\varphi_{3}\rangle_{C}|\varphi'_{2}\rangle_{D}
|e\rangle_{Ea}\ket{0}_M=|\Gamma_{320}\rangle_E\ket{0}_M\\
\end{aligned}
\end{equation}
when Alice and Bob prepare the $1$ basis. This definition is just a special case of Eq.~\eqref{eqn-model1}.
We again assume that Alice and Bob do not know the details of
$|\varphi_x\rangle_C$ and $|\varphi'_y\rangle_D$ $(x,y=0,1,2,3)$.
Recall that $|\varphi_x\rangle_C$ and $|\varphi'_y\rangle_D$
are both in the two-dimensional Hilbert space and they are disjoint ($|\varphi_{xy}\rangle_{CD}=|\varphi_x\rangle_C|\varphi'_y\rangle_D$)
(see previous Sec.~\ref{sec-general-4-dim}),
and we may arbitrarily assign a phase to each of them. Thus
\begin{equation}
\label{eqn-varphi23}
\begin{aligned}
|\varphi_2\rangle_C&=C_{20}|\varphi_0\rangle_C+C_{21}|\varphi_1\rangle_C \\
|\varphi_3\rangle_C&=C_{30}|\varphi_0\rangle_C+C_{31}e^{i\theta}|\varphi_1\rangle_C \\
|\varphi'_2\rangle_D&=C'_{20}|\varphi'_0\rangle_D+C'_{21}|\varphi'_1\rangle_D \\
|\varphi'_3\rangle_D&=C'_{30}|\varphi'_0\rangle_D+C'_{31}e^{i\theta'}|\varphi'_1\rangle_D
\end{aligned}
\end{equation}
must hold for some complex numbers $C_{xy}$ and $C'_{xy}$. By considering that $|\varphi_0\rangle_C$ and $|\varphi_1\rangle_C$ can have some trivial overall phases, we can assume $C_{20}$ and $C_{21}$ are both non-negative real numbers. Next, by omitting the trivial overall phase of $|\varphi_3\rangle_C$, we can simply assume that $C_{30}$ and $C_{31}$ are also non-negative real numbers. In the same way, $C'_{xy}$ are also non-negative real numbers.  Substitute Eqs. \eqref{eqn-2} to the last two equations of \eqref{eqn-3}, and we obtain
\begin{equation}
\begin{aligned}
C_{30}C'_{20}|\Gamma_{001}\rangle_E+C_{31}C'_{21}e^{i\theta}|\Gamma_{111}\rangle_E&=0\\
C_{30}C'_{21}|\Gamma_{012}\rangle_E+C_{31}C'_{20}e^{i\theta}|\Gamma_{102}\rangle_E&=0\\
C_{20}C'_{30}|\Gamma_{001}\rangle_E+C_{21}C'_{31}e^{i\theta'}|\Gamma_{111}\rangle_E&=0\\
C_{20}C'_{31}|\Gamma_{012}\rangle_E+C_{21}C'_{30}e^{-i\theta'}|\Gamma_{102}\rangle_E&=0.\\
\end{aligned}
\end{equation}
 From the above equation, one can verify that if any one of $\{C_{xy}, C'_{xy}| x=2,3,y=0,1\}$ equals to $0$, Eq. \eqref{eqn-3} and the normalized conditions of $|\varphi_2\rangle_C,|\varphi'_2\rangle_D,|\varphi_3\rangle_C,|\varphi'_3\rangle_D$ cannot be all satisfied. Hence $C_{xy}\not=0$, $C'_{xy}\not=0$, and furthermore we have
\begin{equation}
\label{eqn-5}
\begin{aligned}
&|\Gamma_{001}\rangle_E+e^{i\theta}|\Gamma_{111}\rangle_E=0.\\
\end{aligned}
\end{equation}

Obviously, Eq.~\eqref{eqn-5} makes sure that Eve has no information on Alice's and Bob's bits in basis $0$ when Eve announces the message $\ket{1}_M$. Therefore, the above-observed probabilities can promise Alice and Bob to share secure-key bits even when their encoding states are not characterized. In other words, we have proven the security of the QMDIQKD protocol in the case of no error and no loss.

\subsection{Proof: General pure-qubit case}
To begin our analysis, without loss of generality, we can assume in Eq.~\eqref{eqn-model1} $|\Gamma_{xyz}\rangle_E=\sum_n\gamma_{xyzn}|n\rangle_E$, in which $|n\rangle_E$ are a set of normalized orthogonal bases of Eve's states, and complex number $\gamma_{xyzn}=_E\langle n|\Gamma_{xyz}\rangle_E$, satisfying $\sum_n\big|\gamma_{xyzn}\big|^2=1$. Thus, we can give the density matrix for the case that Alice and Bob both select basis $0$:

\begin{equation}
\label{eqn-rhoAB-1}
\begin{aligned}
\rho=&\frac{1}{p(1|0,0)+p(1|1,1)+p(1|0,1)+p(1|1,0)}\cdot \\
&\sum_nP\{£¨\sqrt{p(1|0,0)}\gamma_{001n}\ket{0}_A\ket{0}_B+\sqrt{p(1|1,1)}\gamma_{111n}\ket{1}_A\ket{1}_B+\sqrt{p(1|0,1)}\gamma_{011n}\ket{0}_A\ket{1}_B+\sqrt{p(1|1,0)}\gamma_{101n}\ket{1}_A\ket{0}_B\},
\end{aligned}
\end{equation}
in which, $P\{|x\rangle\}=|x\rangle\langle x|$. The aim of this EDP is to obtain perfect Bell states $|\phi^{-\theta}\rangle_{AB}=(\ket{0}_A\ket{0}_B-e^{-i\theta}\ket{1}_A\ket{1}_B)/\sqrt{2}$. Accordingly, we can define the bit error rate $e_{b}$ and phase error rate $e_{p}$ under basis $0$:

\begin{eqnarray}
e_b&=& _A\langle 0|_B\langle 1|\rho\ket{1}_B\ket{0}_A+_A\langle 1|_B\langle 0|\rho\ket{0}_B\ket{1}_A=\frac{p(1|0,1)+p(1|1,0)}{p(1|0,0)+p(1|1,1)+p(1|0,1)+p(1|1,0)}\\
e_p&=& _{AB}\langle \phi^{+\theta}|\rho|\phi^{+\theta}\rangle_{AB}+_{AB}\langle \psi^{+\theta}|\rho|\psi^{+\theta}\rangle_{AB}
\label{eqn-eb-ep}
\\
&=&\frac{\sum_n\big|\sqrt{p(1|0,0)}\gamma_{001n}+e^{i\theta}\sqrt{p(1|1,1)}\gamma_{111n}\big|^2+\sum_n\big|e^{i\theta}\sqrt{p(1|0,1)}\gamma_{011n}+\sqrt{p(1|1,0)}\gamma_{101n}\big|^2}{2(p(1|0,0)+p(1|1,1)+p(1|0,1)+p(1|1,0))},
\end{eqnarray}
in which, $|\phi^{+\theta}\rangle_{AB}=(\ket{0}_A\ket{0}_B+e^{-i\theta}\ket{1}_A\ket{1}_B)/\sqrt{2}$, and $|\psi^{+\theta}\rangle_{AB}=(e^{-i\theta}\ket{0}_A\ket{1}_B+\ket{1}_A\ket{0}_B)/\sqrt{2}$. The task is to find a way to estimate an effective upper-bound of $e_p$. Before proceeding, we remark that we just focus on the $e_p$ and final key-bits rate for the $0$ basis for simplicity, and thus we do not need to calculate the density matrix for the $1$ basis. But, the $e_p$ of this $0$ basis must be related to some probabilities of the $1$ basis, such as $p(1|2,3)$. Now, we begin to detail how to obtain an upper bound of $e_p$.


We substitute the relations \eqref{eqn-varphi23} into Eq.~\eqref{eqn-model1} to obtain the following constraints:
\begin{equation}
\label{eqn-constraints1}
\begin{aligned}
&C_{x0}C'_{y0}\sqrt{p(z|0,0)}|\Gamma_{00z}\rangle_E
+C_{x0}C'_{y1}e^{ia\theta'}\sqrt{p(z|0,1)}|\Gamma_{01z}\rangle_E\\
&+C_{x1}C'_{y0}e^{ib\theta}\sqrt{p(z|1,0)}|\Gamma_{10z}\rangle_E
+C_{x1}C'_{y1}e^{i(a\theta'+b\theta)}\sqrt{p(z|1,1)}|\Gamma_{11z}\rangle_E\\&=\sqrt{p(z|x,y)}|\Gamma_{xyz}\rangle_E,
\end{aligned}
\end{equation}
in which, $x,y=\{0,1,2,3\}$, $z=\{0,1,2\}$, $a=0$ when $y\not=3$, $a=1$ when $y=3$, $b=0$ when $x\not=3$, and $b=1$ when $x=3$. Considering that $|\Gamma_{xyz}\rangle_E$ can be spanned by a set of basis $|n\rangle_E$ and with Eq.~\eqref{eqn-constraints1}, we obtain
\begin{equation}
\label{eqn-constraints2}
\begin{aligned}
&\sum_n\big|C_{30}C'_{20}\sqrt{p(1|0,0)}\gamma_{001n}+C_{31}C'_{21}\sqrt{p(1|1,1)}e^{i\theta}\gamma_{111n}\big|^2\\
&\leqslant
(\sqrt{p(1|3,2)}+C_{30}C'_{21}\sqrt{p(1|0,1)}+C_{31}C'_{20}\sqrt{p(1|1,0)})^2\triangleq \xi \\
&\sum_n\big|C_{30}C'_{21}\sqrt{p(2|0,1)}\gamma_{012n}+C_{31}C'_{20}\sqrt{p(2|1,0)}e^{i\theta}\gamma_{102n}\big|^2\\
&\leqslant
(\sqrt{p(2|3,2)}+C_{30}C'_{20}\sqrt{p(2|0,0)}+C_{31}C'_{21}\sqrt{p(2|1,1)})^2\triangleq \zeta \\
\end{aligned}
\end{equation}
Note that $|\varphi_x\rangle_C$ must be normalized, and thus we have
\begin{equation}
\label{eqn-constraints3}
\begin{aligned}
&C^2_{30}+C^2_{31}+2C_{30}C_{31}Re(e^{i\theta}_C\langle \varphi_0|\varphi_1\rangle_C)=1\\
&C'^2_{20}+C'^2_{21}+2C'_{20}C'_{21}Re(_D\langle \varphi'_0|\varphi'_1\rangle_D)=1,\\
\end{aligned}
\end{equation}
where $Re(x)$ represents the real part of complex number $x$. From Eq.~\eqref{eqn-model1}, it is easy to verify that
\begin{equation} \label{MIs:Pure:Real}
\begin{aligned}
\big|Re(e^{i\theta}_C\langle \varphi_0|\varphi_1\rangle_C)\big| &\leqslant \sqrt{p(0|1,0)p(0|0,0)}+\sqrt{p(1|1,0)p(1|0,0)}+\sqrt{p(2|1,0)p(2|0,0)}\triangleq \chi \\
\big|Re(_D\langle \varphi'_0|\varphi'_1\rangle_D)\big| &\leqslant \sqrt{p(0|0,1)p(0|0,0)}+\sqrt{p(1|0,1)p(1|0,0)}+\sqrt{p(2|0,1)p(2|0,0)}\triangleq \chi'.
\end{aligned}
\end{equation}
 From constraints~\eqref{eqn-constraints2}, we also know that

\begin{equation}
\label{eqn-constraints4}
\begin{aligned}
&(C_{30}C'_{20}\sqrt{p(1|0,0)}-C_{31}C'_{21}\sqrt{p(1|1,1)})^2\leqslant \xi\\
&(C_{30}C'_{21}\sqrt{p(2|0,1)}-C_{31}C'_{20}\sqrt{p(2|1,0)})^2\leqslant \zeta,
\end{aligned}
\end{equation}
easily. Assuming that one obtains $C_{30}$,$C_{31}$, $C'_{20}$, and $C'_{21}$ satisfying the above constraints, the only remaining task is to find the maximum of $\sum_n\big|\sqrt{p(1|0,0)}\gamma_{001n}+\sqrt{p(1|1,1)}e^{i\theta}\gamma_{111n}\big|^2$. By observing the first inequality of \eqref{eqn-constraints2} and with the help of triangle inequality
and Cauchy-Schwarz inequality, we have:

\begin{equation}
\begin{aligned}
\xi&\geqslant \sum_n (C_{30}C'_{20}\big|\sqrt{p(1|0,0)}\gamma_{001n}+\sqrt{p(1|1,1)}e^{i\theta}\gamma_{111n}\big|-\big|C_{30}C'_{20}-C_{31}C'_{21}\big|\sqrt{p(1|1,1)}\big|\gamma_{111n}\big|)^2\\
&=C^2_{30}C'^2_{20}\sum_n\big|\sqrt{p(1|0,0)}\gamma_{001n}+e^{i\theta}\sqrt{p(1|1,1)}\gamma_{111n}\big|^2+(C_{30}C'_{20}-C_{31}C'_{21})^2p(1|1,1)\\
&-2C_{30}C'_{20}\big|C_{30}C'_{20}-C_{31}C'_{21}\big|\sqrt{p(1|1,1)}\sum_n\big|\sqrt{p(1|0,0)}\gamma_{001n}+e^{i\theta}\sqrt{p(1|1,1)}\gamma_{111n}\big|\big|\gamma_{111n}\big|\\
&\geqslant C^2_{30}C'^2_{20}\sum_n\big|\sqrt{p(1|0,0)}\gamma_{001n}+\sqrt{p(1|1,1)}e^{i\theta}\gamma_{111n}\big|^2+(C_{30}C'_{20}-C_{31}C'_{21})^2p(1|1,1)\\
&-2C_{30}C'_{20}\big|C_{30}C'_{20}-C_{31}C'_{21}\big|\sqrt{p(1|1,1)}\sqrt{\sum_n\big|\sqrt{p(1|0,0)}\gamma_{001n}+e^{i\theta}\sqrt{p(1|1,1)}\gamma_{111n}\big|^2}\\
&=(C_{30}C'_{20}\sqrt{\sum_n\big|\sqrt{p(1|0,0)}\gamma_{001n}+\sqrt{p(1|1,1)}e^{i\theta}\gamma_{111n}\big|^2}-\big|C_{30}C'_{20}-C_{31}C'_{21}\big|\sqrt{p(1|1,1)})^2.\\
\end{aligned}
\end{equation}
Therefore, we obtain
\begin{equation}
\label{eqn-max-exp1}
\begin{aligned}
&\sum_n\big|\sqrt{p(1|0,0)}\gamma_{001n}+\sqrt{p(1|1,1)}e^{i\theta}\gamma_{111n}\big|^2\leqslant \frac{\big(\sqrt{\xi}+\big|C_{30}C'_{20}-C_{31}C'_{21}\big|\sqrt{p(1|1,1)}\big)^2}{C^2_{30}C'^2_{20}},
\end{aligned}
\end{equation}
Furthermore, by the same way, we can obtain that
\begin{equation}
\label{eqn-max-exp2}
\begin{aligned}
&\sum_n\big|\sqrt{p(1|0,0)}\gamma_{001n}+\sqrt{p(1|1,1)}e^{i\theta}\gamma_{111n}\big|^2\leqslant \frac{\big(\sqrt{\xi}+\big|C_{30}C'_{20}-C_{31}C'_{21}\big|\sqrt{p(1|0,0)}\big)^2}{C^2_{31}C'^2_{21}}.
\end{aligned}
\end{equation}
Combining constraints \eqref{eqn-max-exp1} and \eqref{eqn-max-exp2}, we have
\begin{equation}
\label{eqn-max-exp}
\begin{aligned}
&\sum_n\big|\sqrt{p(1|0,0)}\gamma_{001n}+\sqrt{p(1|1,1)}e^{i\theta}\gamma_{111n}\big|^2\\&
\leqslant min\{max_{C,C'}\{\frac{\big(\sqrt{\xi}+\big|C_{30}C'_{20}-C_{31}C'_{21}\big|\sqrt{p(1|1,1)}\big)^2}{C^2_{30}C'^2_{20}}\},\\
&max_{C,C'}\{\frac{\big(\sqrt{\xi}+\big|C_{30}C'_{20}-C_{31}C'_{21}\big|\sqrt{p(1|0,0)}\big)^2}{C^2_{31}C'^2_{21}}\}\}\triangleq \varepsilon,
\end{aligned}
\end{equation}
in which $max_{C,C'}$ means searching over all $C_{30}$, $C_{31}$, $C'_{20}$ and $C'_{21}$ satisfying constraints \eqref{eqn-constraints3} and \eqref{eqn-constraints4}, and $max\{a,b\}$ ($min\{a,b\}$) yields the larger (smaller) one of $a$ and $b$. Although finding an analytical expression of $\varepsilon$ may be difficult, one can get $\varepsilon$ with numerical methods. When $\varepsilon$ is obtained, the phase error rate is given by:
\begin{equation}
\label{eqn-ep-bound}
\begin{aligned}
e_{p}\leqslant \frac{(\sqrt{p(1|0,1)}+\sqrt{p(1|1,0)})^2+\varepsilon}{2(p(1|0,0)+p(1|1,1)+p(1|0,1)+p(1|1,0))}.
\end{aligned}
\end{equation}
One should note that Eq.~\eqref{eqn-ep-bound} requires that $C_{30}C'_{20}$ and $C_{31}C'_{21}$ cannot be both zero. If the constraints \eqref{eqn-constraints3} and \eqref{eqn-constraints4} allow $C_{30}C'_{20}$ and $C_{31}C'_{21}$ to be both zero, $e_p$ can take on an arbitrary value~\footnote{This situation can occur when $|\varphi_0\rangle=|\varphi_2\rangle=|0\rangle$ and $|\varphi_1\rangle=|\varphi_3\rangle=|1\rangle$ for both Alice and Bob.  In this case, security is obviously impossible, which is consistent with that both $C_{30}C'_{20}$ and $C_{31}C'_{21}$ can be zero.}. The secure key rate is given by
\begin{equation}
\label{eqn-key-rate-formula1}
R=1-H(e_b)-H(e_p),
\end{equation}
where $H(x)=-x \log x -(1-x) \log (1-x)$ is the Shannon's binary entropy function.
Based on this lower bound of the secret key rates,
in Sec.~\ref{sec-simulation}, we will show a numerical simulation of real-life QMDIQKD.
But before that, we argue in the next section that the secret key rate obtained here for the pure-state case is applicable to the mixed-state case as well.

\section{Security Proof for the mixed-state case}
\label{sec-mixed-state}

We proved the security of our QMDIQKD scheme when Alice and Bob send pure qubit states in Sec.~\ref{sec-security-pure-state}.
In this case, the key rate is given by Eq.~\eqref{eqn-key-rate-formula1} with $e_p$ bounded by Eq.~\eqref{eqn-ep-bound}.
Here, we prove below that the exact same key rate formula and $e_p$ bound are applicable to the case where Alice and Bob send mixed qubit states. The proof strategy is to reuse the pure-state results by relying on linearity.

We first purify the mixed states of Alice and Bob in a system $F$.
Then the entire state of Alice and Bob can be conditional on value $i$ of system $F$, and we denote it as $\rho_{ABCD,i}$.
Note that for each $i$, Alice's and Bob's states in $\rho_{ABCD,i}$ are pure.
The channel by Eve converts the state $\rho_{ABCD,i}$ to a final state in systems $A$ and $B$ as
in Eq.~\eqref{eqn-rhoAB-1}:
$$
\rho_{ABCD,i} \longrightarrow \rho_{AB,i} = \mathcal{E}(\rho_{ABCD,i})
$$
where $\rho_{AB,i}$ is the expression in Eq.~\eqref{eqn-rhoAB-1}, and $\mathcal{E}$ represents the channel.
For simplicity we denote $\rho_i=\rho_{ABCD,i}$.
We may calculate $e_{p,i}$ as
in Eq.~\eqref{eqn-eb-ep}
for each $i$ of system $F$, which we denote as follows:
$$
e_{p,i}=g'(\rho_{AB,i}),
$$
where $g'(\rho)$ is defined as the right-hand side of Eq.~\eqref{eqn-eb-ep}.
Note that $g'$ is linear.
Furthermore, $g'$ only depends on the measurement statistics of $\rho_{AB,i}$, denoted as $s_{i,j}=\operatorname{Tr} (O_j \rho_{AB,i})$ for some measurement $O_j$ (e.g., projection onto a Bell state $|\phi^{+\theta}\rangle$).
Let $\vec{s}(\rho_{AB,i})=[s_{i,1}, s_{i,2},\ldots]$ denote the operation that returns a collection of measurement statistics.
Thus, we can express $e_{p,i}$ as a function of the statistics instead:
\begin{equation}
\label{eqn-mixed-state-epi}
e_{p,i}=g( \vec{s}( \mathcal{E} ( \rho_i ) ) ),
\end{equation}
where $g \circ \vec{s} = g'$.

Denote the key rate taking into account privacy amplification only by $R_\text{PA}(e_p)=1-H(e_p)$.

\begin{remark}
{\rm
$R_\text{PA}(e_{p,i})$ is a secure key rate since for each $i$,
Alice's and Bob's states in $\rho_{ABCD,i}$ are pure and thus the key rate formula and the bound for the phase error rate in Sec.~\ref{sec-security-pure-state} apply.
}
\end{remark}

\begin{remark}
{\rm
$\sum_i p_i R_\text{PA}(e_{p,i})$ is a secure key rate because Alice and Bob could use the value of system $F$ to compute a key for each $i$.
}
\end{remark}

\begin{lemma}
\label{lemma-1}
{\rm
Given that $\sum_i p_i R_\text{PA}(e_{p,i})$ is a secure key rate,
$R_\text{PA}(\sum_i p_i e_{p,i})$ is a secure key rate.
This corresponds to the case the Alice and Bob do not use the value of system $F$,
and so only the average phase error rate
$\sum_i p_i e_{p,i}$ is used.
}
\end{lemma}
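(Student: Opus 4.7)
The plan is to reduce the statement to a single application of Jensen's inequality, exploiting the fact that $R_\text{PA}(e) = 1 - H(e)$ is a convex function of $e$ on $[0,1]$ (since Shannon's binary entropy $H$ is concave). Given this convexity, Jensen's inequality yields immediately
\[
R_\text{PA}\!\left(\sum_i p_i\, e_{p,i}\right) \;\leq\; \sum_i p_i\, R_\text{PA}(e_{p,i}).
\]
This is essentially the only calculation I would perform; the rest is operational bookkeeping.

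Second, I would invoke the standard monotonicity principle for secure key rates: if $r$ is known to be a secure key rate for a protocol, then any $r' \leq r$ is also a secure key rate, because Alice and Bob can always discard extra key bits or apply more aggressive privacy amplification than strictly necessary. Combining this with the hypothesis of the lemma---that $\sum_i p_i R_\text{PA}(e_{p,i})$ is itself a secure key rate, as noted in the preceding remark---the Jensen inequality immediately upgrades the averaged quantity $R_\text{PA}(\sum_i p_i e_{p,i})$ to a secure key rate as well. Operationally, this confirms that Alice and Bob incur no security penalty if they choose to forget the value of system $F$ and instead work with only the averaged phase error rate that is directly estimable in the experiment.

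I do not expect a genuine technical obstacle here. The one mild subtlety is the regime $\sum_i p_i e_{p,i} \geq 1/2$, in which $R_\text{PA}(\sum_i p_i e_{p,i}) \leq 0$ and the conclusion becomes vacuous (no key is produced); the Jensen-plus-monotonicity argument nonetheless covers this regime correctly and uniformly. If desired, one could make the argument slightly more formal by truncating $R_\text{PA}$ at zero, but this cosmetic issue does not affect the substance of the proof.
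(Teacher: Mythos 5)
Your proof is correct and follows essentially the same route as the paper: the paper's entire justification is the convexity of $R_\text{PA}(\cdot)$, which is exactly your Jensen's-inequality step, combined with the (implicit in the paper, explicit in your write-up) principle that any rate not exceeding a secure key rate is itself secure. Your additional remarks on the vacuous regime $\sum_i p_i e_{p,i}\geq 1/2$ are a harmless refinement and do not change the substance.
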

\begin{proof}
This is true because of the convexity of $R_\text{PA}(\cdot)$ in the domain $[0,1/2]$.
\end{proof}

\begin{remark}
{\rm
We express Eq.~\eqref{eqn-ep-bound} in a generic manner:
\begin{align}
g(\vec{s}) \leq f(\vec{s})
\text{ for any collection of measurement statistics } \vec{s},
\label{eqn-gf}
\end{align}
where $g(\vec{s})$ represents $e_p$ generated by some measurement statistics $\vec{s}$ (c.f. Eqs.~\eqref{eqn-mixed-state-epi} and \eqref{eqn-eb-ep})
and $f(\vec{s})$ represents the right-hand side of Eq.~\eqref{eqn-ep-bound}.
Note that $\vec{s}$ contains $p(1|0,0)$, $p(1|0,1)$, for example.
}
\end{remark}

\begin{lemma}
{\rm
$R_\text{PA}(f( \vec{s}( \mathcal{E} ( \rho ) ) ))$ is a secure key rate, where $\rho=\sum_i p_i \rho_i$ is the average state obtained by ignoring system $F$.
}
\end{lemma}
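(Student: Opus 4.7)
The plan is to assemble this lemma by chaining three facts already available in the excerpt: the ``average key rate'' statement of Lemma 1, the linearity of the maps $\mathcal{E}$, $\vec{s}$, and $g$, and the pointwise bound $g(\vec{s})\le f(\vec{s})$ combined with monotonicity of $R_\text{PA}$. The goal is to show that replacing the true (but unknown) average phase error rate $\sum_i p_i e_{p,i}$ by the explicitly computable upper bound $f(\vec{s}(\mathcal{E}(\rho)))$ only decreases the key rate, and therefore still yields a secure key rate.

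First, I would rewrite the average phase error rate that appears in Lemma 1 in terms of the average state $\rho=\sum_i p_i \rho_i$. Using $e_{p,i}=g(\vec{s}(\mathcal{E}(\rho_i)))$ from Eq.~\eqref{eqn-mixed-state-epi}, and exploiting that $\mathcal{E}$ is a (linear) quantum channel, that $\vec{s}(\cdot)$ is a collection of traces against fixed operators $O_j$ (hence linear in the state), and that $g$ is linear in its statistics argument, I would write
\begin{equation*}
\sum_i p_i e_{p,i} \;=\; \sum_i p_i\, g(\vec{s}(\mathcal{E}(\rho_i))) \;=\; g\!\left(\vec{s}\!\left(\mathcal{E}\!\left(\sum_i p_i \rho_i\right)\right)\right) \;=\; g(\vec{s}(\mathcal{E}(\rho))).
\end{equation*}
By Lemma~\ref{lemma-1}, $R_\text{PA}(\sum_i p_i e_{p,i})=R_\text{PA}(g(\vec{s}(\mathcal{E}(\rho))))$ is therefore a secure key rate.

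Second, I would invoke the bound stated in Eq.~\eqref{eqn-gf}, instantiated at the particular statistics $\vec{s}(\mathcal{E}(\rho))$, to get $g(\vec{s}(\mathcal{E}(\rho)))\le f(\vec{s}(\mathcal{E}(\rho)))$. Since $R_\text{PA}(x)=1-H(x)$ is monotonically decreasing on $[0,1/2]$, this pointwise inequality flips into $R_\text{PA}(f(\vec{s}(\mathcal{E}(\rho))))\le R_\text{PA}(g(\vec{s}(\mathcal{E}(\rho))))$. Any quantity no larger than a secure key rate is itself a (valid, possibly looser) secure key rate, so $R_\text{PA}(f(\vec{s}(\mathcal{E}(\rho))))$ is secure, which is the desired conclusion.

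The only potential subtlety, and what I expect to be the main thing worth pinning down, is the linearity step: one must verify that the quantity $g$ extracted from Eq.~\eqref{eqn-eb-ep} really is linear in the collected statistics $\vec{s}$, and that the relevant statistics $\vec{s}(\mathcal{E}(\rho))$ assembled from the mixture literally equal the $\rho$-averaged statistics. This is immediate from the structure of Eq.~\eqref{eqn-ep-bound} (it is a ratio and sum of the $p(z|x,y)$'s, with the denominator also being a linear functional of the state), provided we keep $\vec{s}$ as the full collection of probabilities so that both numerator and denominator are linear in $\rho$ before $g$ combines them. Granted that, the proof is essentially a one-line composition of Lemma~\ref{lemma-1} with the monotone bound $g\le f$; an additional short remark that $f(\vec{s}(\mathcal{E}(\rho)))$ is the quantity Alice and Bob actually estimate from the observed statistics (and thus usable in the protocol) closes the argument.
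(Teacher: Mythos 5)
Your proposal is correct and follows essentially the same route as the paper's own proof: invoke Lemma~\ref{lemma-1} for the security of $R_\text{PA}(\sum_i p_i e_{p,i})$, use the linearity of $\mathcal{E}$, $\vec{s}$, and $g$ to identify $\sum_i p_i e_{p,i}$ with $g(\vec{s}(\mathcal{E}(\rho)))$, then apply the bound $g\le f$ from Eq.~\eqref{eqn-gf} together with the monotone decrease of $R_\text{PA}$ on $[0,1/2]$. The linearity subtlety you flag (the ratio structure of Eq.~\eqref{eqn-eb-ep}) is present in the paper's argument as well and is not a defect specific to your write-up.
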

\begin{proof}
Note that Lemma~\ref{lemma-1} shows that
\begin{align}
R_\text{PA}(\sum_i p_i e_{p,i})
\end{align}
is secure.
(This corresponds to giving system $F$ to Eve.  But this point is not relevant to our current discussion.)
Three facts are important: $\mathcal{E}$ is linear, $g$ is linear because $g'$ is linear, and $\vec{s}$ is linear.
This means that the parameter in the above equation is
\begin{eqnarray*}
\sum_i p_i e_{p,i}&=&\sum_i p_i g( \vec{s}( \mathcal{E} ( \rho_i ) ) )
\\
&= &
g( \vec{s}( \mathcal{E} ( \sum_i p_i \rho_i ) ) )
\\
&\leq&
f( \vec{s}( \mathcal{E} ( \sum_i p_i \rho_i ) ) )
%
\end{eqnarray*}
where the last line is due to Eq.~\eqref{eqn-gf}.

Now, note that $R_\text{PA}(x)$ is a decreasing function of $x$, i.e., $R_\text{PA}(x) \geq R_\text{PA}(y)$ for $y \geq x$ (in the domain $[0,1/2]$).
Thus,
\begin{align}
R_\text{PA}(\sum_i p_i e_{p,i})
\geq
R_\text{PA}(
f( \vec{s}( \mathcal{E} ( \sum_i p_i \rho_i ) ) )
).
\end{align}
Since the left-hand side is secure, the right-hand side must represent a secure key rate when Alice and Bob ignore system $F$ and use only the average state $\rho=\sum_i p_i \rho_i$ which contains the encoding state as a mixed state.
\end{proof}
This proves that
bounding the phase error rate by Eq.~\eqref{eqn-ep-bound} with
Alice and Bob sending mixed qubit states produces a secure key rate $R_\text{PA}$.

For the error correction part of the key rate formula $R_\text{EC}(e_b)\triangleq-H(e_b)$, it can be easily seen that it is applicable to the mixed-state case as well since $e_b$ is directly measured.
Alternatively,
the same line of arguments in the above analysis for $e_p$ could be used for $e_b$ as well, showing that $R_\text{EC}$ is secure.
In essence, we have shown that
the key rate given by Eq.~\eqref{eqn-key-rate-formula1} with $e_p$ bounded by Eq.~\eqref{eqn-ep-bound}
is applicable to the case where Alice and Bob send mixed-qubit states.

\section{Simulation}
\label{sec-simulation}

 We first consider the case that Alice and Bob use ideal BB84 senders (not trusted by Alice and Bob) and an ideal Bell-state MU to perform QMDIQKD with noise-free channel and no Eve's attack, but with photon absorption taken into account. One must observe that $p(1|2,3)=p(1|0,1)=p(2|0,0)=p(2|1,1)=p(2|2,3)=0$, then with
constraint \eqref{eqn-constraints2} we deduce that $\xi=\zeta=0$.
Then, through constraints \eqref{eqn-constraints3} and \eqref{eqn-constraints4}, it is easy to verify that $C_{30}=C_{31}=C'_{20}=C'_{21}\not=0$. Hence, $\varepsilon=0$ and then  $e_p=0$. Therefore, in this case, Alice and Bob can share perfect Bell state $|\phi^{-\theta}\rangle_{AB}$. Furthermore, MDIQKD can be secure in this situation, which fits in well with the example given in Sec.~\ref{sec-our-protocol}.

 For a general situation, an analytical expression is hard to obtain. However, according to $p(z|x,y)$ which is directly known from experiments, $\chi$ can be
deduced. Then all $C_{30}$, $C_{31}$, $C'_{20}$, and $C'_{21}$ satisfying constraints \eqref{eqn-constraints3} and \eqref{eqn-constraints4} can be searched, and the maximum of $\varepsilon$ can be obtained according to Eq.~\eqref{eqn-max-exp}.
Finally,
an upper bound of $e_{p}$ can be obtained.

 To estimate the performance of QMDIQKD, a numeric simulation is given. For comparison, we assume that we use perfect BB84 senders and Bell-state MU (but we do not trust them) to perform the QMDIQKD protocol. The MU, whose implementation can be imagined as the same as the one in Ref. \cite{Lo:MDIQKD:2012}, is equipped with four single photon detectors (SPDs) with the detection efficiency $\eta$ and dark counting rate $d$ $per$ $gate$. We assume that Eve is passive and ignore the channel disturbance and optical misalignment. We also define that $l$ represents the distance from Alice or Bob to the MU and $p_s=10^{-0.02l}\eta$ means the probability that a single photon from Alice or Bob can click a SPD of MU.

 Consequently, if Alice and Bob both emit qubit $|0\rangle$ or $|1\rangle$, the MU will announce message $1$ with probability $p(1|0,0)=p(1|1,1)=p_s^2(1-d)^2/2+2p_s(1-p_s)d(1-d)^2+2(1-p_s)^2d^2(1-d)^2$, in which the first item corresponds to the case that the projection of the incoming photons into $|\phi^+\rangle_{CD}$ is successful: the two photons click the two SPDs and the remaining two SPDs do not give dark clicks. The second item accounts for the case that only one photon clicks one SPD but a dark count occurs in one relevant SPD, and the last item represents the case that two photons are absorbed by the channel but two dark counts occur in two relevant SPDs. If Alice and Bob both emit qubit $|0\rangle$ or $|1\rangle$, the MU will announce message $2$ with probability $p(2|0,0)=p(2|1,1)=2(1-p)^2d^2(1-d)^2+2p(1-p)d(1-d)^2$, in which the first item accounts for the case that only one photon clicks one SPD but a dark count occurs in one relevant SPD, and the last item represents the case that two photons are absorbed by the channel but two dark counts occur in two relevant SPDs. And $p(0|0,0)=p(0|1,1)=1-p(1|0,0)-p(2|0,0)$ also holds.

 By the similar considerations, we set $p(1|1,0)=p(1|0,1)=2(1-p_s)^2d^2(1-d)^2+2p_s(1-p_s)d(1-d)^2$, $p(2|0,1)=p(2|1,0)=p_s^2(1-d)^2/2+2p_s(1-p_s)d(1-d)^2+2(1-p_s)^2d^2(1-d)^2$, $p(0|0,1)=p(0|1,0)=1-p(1|0,1)-p(2|0,1)$, $p(1|2,3)=p(1|3,2)=p(2|2,3)=p(2|3,2)=2(1-p_s)^2d^2(1-d)^2+2p_s(1-p_s)d(1-d)^2$, and $p(0|2,3)=p(0|3,2)=1-p(1|2,3)-p(2|2,3)$. The secure-key rate (unit: per sifted key bit under $0$ basis) versus channel distance $L$(km) from Alice or Bob to the MU is given by Fig.~\ref{fig-1}.

 \begin{figure}[!t]
{\includegraphics[width=.5\columnwidth]{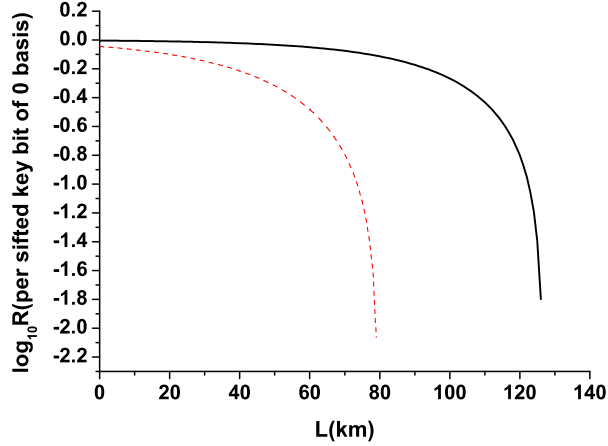}}
\caption{Secure-key rate $R$ (unit: per sifted key bit of $0$ basis) vs channel distance $L$ (km) from Alice or Bob to MU: we set $\eta=0.1$, $d=10^{-5}$ per pulse.
The solid line represents MDIQKD with perfect trustworthy BB84 senders' devices,  MU without optics misalignment, which can distinguish one Bell state $|\phi^+\rangle_{CD}$, and no Eve's attack;
the dashed line is for QMDIQKD with perfect BB84 senders' devices (but not trusted by Alice and Bob), MU without optics misalignment, which can distinguish two Bell states $|\phi^+\rangle_{CD}$ and $|\psi^+\rangle_{CD}$, and no Eve's attack.
\label{fig-1}}
\end{figure}

 In Fig.~\ref{fig-1}, the secure-key rate for MDIQKD is given by the solid line, in which we assume that Alice and Bob are aware that their encoding states are perfect, MU can only distinguish Bell state $|\phi^+\rangle_{CD}$, and Eve is passive.
The secure-key rate for QMDIQKD is given by the dashed line, in which
we have ideal BB84 senders (but we do not trust them now), MU can distinguish two Bell states $|\phi^+\rangle_{CD}$ and $|\psi^+\rangle_{CD}$, and Eve is passive.

From Fig.~\ref{fig-1}, we see that QMDIQKD with only the two-dimensional assumption can offer near 160-km QKD service between Alice and Bob (i.e., twice the distance between the MU and Alice or Bob). One should note that the two lines in Fig.~\ref{fig-1} do not converge at $L=0$ due to the errors introduced by dark counts.
Here, the detection efficiency is $10\%$ and thus dark counts occur.
It can be seen that these errors lower the key rate more significantly in QMDIQKD than in the original MDIQKD.  On the other hand, if the efficiency is $100\%$, thus eliminating the effect of dark counts, the two lines will converge at $L=0$. To see this, if $L=0$ and the efficiency of SPDs is $1$, we must have $p(1|0,1)=p(1|3,2)=p(2|0,0)=p(2|1,1)=p(2|3,2)=0$ and $p(1|00)=p(1|11)=1$.
Then, $\xi$=0 and $\zeta$=0 according to their definitions in \eqref{eqn-constraints2}. By \eqref{eqn-constraints4}, we have $C_{30} C'_{20} = C_{31} C'_{21}$, since $p(1|0,0)=p(1|1,1)=1$. Also, note that $C_{30} C'_{20}\neq 0$ in this case. Then, by \eqref{eqn-max-exp}, we obtain $\epsilon = 0$, $e_p =0$, and finally $R=1$. And in this ideal situation, the original MDIQKD will also have $R=1$. Thus, the lines will converge in this case.

\section{Conclusion} \label{sec-conclusion}
In this paper, we have proved the security of a new MDIQKD scheme where uncharacterized qubit source systems are used. The difference between MDIQKD and QMDIQKD lies in the knowledge of Alice and Bob on their encoding states and the requirement of MU announcement. In MDIQKD, the encoded states are assumed to be well characterized, and the MU only needs to identify any one of the Bell states \cite{Lo:MDIQKD:2012}. In QMDIQKD, Alice and Bob do not need to worry about their encoding systems except that they need to be sure about the source-state dimensions, and the MU can distinguish at least two Bell states. The simulation results show that this scheme is practical. Thus, our work advances MDIQKD to be more device independent while keeping its main advantage: --- high loss and error tolerance.

There are a few extensions to our work that can be done in the future.
\begin{itemize}
\item
Our security proof assumes that Eve's attack is collective. This restriction can be removed by employing the recently developed security proofs \cite{Caves:deFinetti:2002,Renner:deFinetti:2009} to make our protocol secure against the most general attacks. It is an interesting future project to extend our proof to the most general attack case in the finite-size key scenario.

\item
Currently, the QMDIQKD protocol is restricted to 2-dimensional (qubit) source systems.
However, direct extension of this protocol to higher dimensions is not fruitful.  When the dimension is four or above, the protocol becomes insecure, since the four BB84 states can be unambiguously represented by a four-dimensional state.

\item
Weak coherent sources are widely used in QKD experiments. The decoy-state method can be employed to solve the multi-photon state issue. To extend the QMDIQKD protocol to coherent state sources, we can combine this protocol with the decoy-state method. Since the decoy-state method is proven to be secure under the Gottesman, Lo, Lutkenhaus, and Preskill (GLLP) scenario \cite{GLLP:2004,Lo:Decoy:2005}, such an extension is expected to be natural.
\end{itemize}

\section*{Acknowledgments}
The authors thank O.~Gittsovich, H.-K.~Lo, N.~L\"utkenhaus, B.~Qi, K.~Tamaki, and F.~Xu for helpful discussions. This work was supported by the National Basic Research Program of China (Grants No. 2011CBA00200 and No. 2011CB921200), National Natural Science Foundation of China (Grants No. 61101137 and No. 61201239). X.~M.~gratefully acknowledges the financial support from the National Basic Research Program of China Grants No.~2011CBA00300 and No.~2011CBA00301; and the 1000 Youth Fellowship program in China. C.-H.~F.~F.~gratefully acknowledges the financial support of RGC Grant No.~700712P from the HKSAR Government.

%
%
%

\bibliographystyle{apsrev4-1}

\bibliography{Biblisource}

\end{document}